\documentclass[pdflatex,sn-mathphys-ay]{sn-jnl}

\usepackage{graphicx}%
\usepackage{multirow}%
\usepackage{amsmath,amssymb,amsfonts}%
\usepackage{amsthm}%
\usepackage{mathrsfs}%
\usepackage[title]{appendix}%
\usepackage{xcolor}%
\usepackage{textcomp}%
\usepackage{manyfoot}%
\usepackage{booktabs}%
\usepackage{algorithm}%
\usepackage{algorithmicx}%
\usepackage{algpseudocode}%
\usepackage{listings}%
\usepackage[export]{adjustbox}

\usepackage[T1]{fontenc}
\usepackage[latin9]{inputenc}
\usepackage{hyperref}
\usepackage{babel}

\theoremstyle{thmstyleone}%

\newtheorem{proposition}{Proposition}[section]
\newtheorem{lemma}{Lemma}[section]
\newtheorem{corollary}{Corollary}[section]

\theoremstyle{thmstyletwo}%
\newtheorem{remark}{Remark}%

\newtheorem{assumption}{Assumption}
\newtheorem{heuristic}{Heuristic}

\theoremstyle{thmstylethree}%
\begin{document}

\title{First analytical coverage bounds of a fully specified nested sampling algorithm}

\author*[1]{\fnm{Johannes} \sur{Buchner}}\email{johannes.buchner.acad@gmx.com}

\affil*[1]{\orgname{Max Planck Institute for Extraterrestrial Physics},
\orgaddress{\street{Giessenbachstrasse}, \city{Garching},
\postcode{85748}, \country{Germany}}}

\abstract{
Nested sampling is a Monte Carlo algorithm for posterior estimation and Bayesian model comparison. It maintains a population of $K$ live points sampled from the prior, and at each iteration discards the lowest-likelihood point and replaces it with a new sample drawn from the prior restricted to exceed the discarded likelihood. Achieving this likelihood-restricted prior sampling efficiently and reliably is the central computational challenge. For low-to-moderate dimensional problems, MLFriends is a general and robust region-based approach that constructs a proposal region by bootstrap aggregation over the current live points and rejects proposals outside this region. We present a self-contained mathematical formulation of MLFriends and derive, under a homogeneous Binomial point process model for the live points, heuristic bounds on the expected fraction of the likelihood-restricted prior not covered by the proposal region. These bounds decay as $(\frac{1}{3}Km)^{-3/2}$, where $m$ is the number of bootstrap rounds, and are negligibly small for practical parameter choices. We show heuristically that the resulting bias in the marginal likelihood estimate is negligible compared to the inherent statistical variance of a nested sampling run. While a fully rigorous treatment remains an open problem, these results provide the first analytical characterisation of a fully specified and practically implementable nested sampling algorithm, without assuming an idealised or asymptotic sampling procedure.
}

\keywords{Nested sampling, Monte Carlo algorithms}

\maketitle

\section{Introduction}

In the physical sciences, Skilling's nested sampling \citep{Skilling2004,skilling2006nested}
is a popular Monte Carlo algorithm to achieve Bayesian model comparison
and posterior distributions in practice. Sophisticated physical models
can induce posteriors that are complicated and multi-modal. Nested
sampling can successfully reconstruct these thanks to making only a few assumptions.

Nested sampling integrates a posterior over an arbitrary prior space
$\Omega$ by a one-dimensional integral transform:
$
\int_{\Omega}{\cal L}(\theta)\,d\pi(\theta)=\int_{0}^{1}{\cal L}(X)\,dX,
$
where ${\cal L}(X)$ is the inverse of $X(L_{\mathrm{min}})$, the
survival function of the likelihood-restricted prior:
$
X(L_{\mathrm{min}})=\Pr\{{\cal L}(\theta)>L_{\mathrm{min}}\}
=\int_{\{\theta:\,{\cal L}(\theta)>L_{\mathrm{min}}\}}d\pi(\theta).
$

Suppose $\theta_{1},\ldots,\theta_{K}$ are i.i.d.\ samples from the
prior $\pi$, with likelihoods ${\cal L}_{1},\ldots,{\cal L}_{K}$.
Then $X({\cal L}_{1}),\ldots,X({\cal L}_{K})$ are uniformly distributed.
Discarding the lowest-likelihood point,
$L_{\mathrm{min}}=\min_{i}\{{\cal L}_{i}\}$,
removes a fraction $1-X(L_{\mathrm{min}})$ of the prior mass, where
$1-X(L_{\mathrm{min}})\sim\mathrm{Beta}(1,K)$ with mean $1/(K+1)$.
Equivalently, the remaining prior mass is $X(L_{\mathrm{min}})\sim\mathrm{Beta}(K,1)$.
Nested sampling iteratively applies this idea. At each iteration,
the point discarded from the live points is replaced with a new sample
from the prior restricted to ${\cal L}>L_{\mathrm{min}}$. The prior mass
$X$ shrinks in each iteration by an estimated fraction $\frac{K-1}{K}$.
Finally, upon reaching some convergence criterion \citep[see e.g.,][]{2014arXiv1412.6368W,Salomone2018},
the discarded points $\theta_{i}$ are assigned the unnormalised weights
$w_{i}={\cal L}_{i}\times V_{i}$, where
$V_{i}=\frac{1}{K}\times\left(\frac{K-1}{K}\right)^{i-1}$
is the estimated prior mass discarded. The posterior is approximated
by these posterior samples with normalised weights $w_{i}/\sum w_{i}$,
and the marginal likelihood by $Z=\sum w_{i}$. The convergence of
the posterior distribution and the marginal likelihood estimator to
the truth has been established under various assumptions in \citep{Chopin2010,Schittenhelm2020,evans2007discussion,skilling2009nested}.

Above, the availability of a procedure for likelihood-restricted prior
sampling \citep[LRPS, see][for a literature review]{Buchner2021c}
has been assumed. For practical use of nested sampling,
efficient LRPS algorithms are essential. The focus of this work is
to extend the convergence proofs above to such more realistic settings.

In high-dimensional settings, `step sampling' LRPS methods achieve acceptable
sampling efficiencies with random walks (such as Markov Chain Monte
Carlo; MCMC) started at a randomly selected live point \citep[see e.g.][who use slice sampling]{Jasa2012,Handley2015}.
From a sequential Monte Carlo perspective, \citet{Salomone2018} conclude
that some remaining dependence between start and end point in step
sampling LRPS is not damaging to the convergence of nested sampling.
This agrees with empirical observations \citep[see also the simulations of][]{Salomone2018}.
\citet{Buchner2023} observed that for a given MCMC proposal,
the convergence speed is set by the product of the number of live
points and the number of MCMC steps. This is because a run with twice
as many live points requires, to reach the same likelihood threshold,
twice as many iterations. Since one discarded point is now accompanied
by another point, this eases the requirement for how far the MCMC
has to diffuse in a single step. The theoretical implication is that
nested sampling converges to the true posterior as $K\rightarrow\infty$,
even with an inefficient MCMC proposal. The practical implication
is that such convergence can be achieved by resuming a nested sampling
run with ever more live points \citep[see][for how such resuming can be implemented]{skilling2009nested,Higson2019,Speagle2020,Buchner2021c},
and the convergence may be diagnosed.

In low-dimensional settings, rejection sampling from a sub-space of
the prior is efficient \citep{Shaw2007,Feroz2008,Mukherjee2006,Buchner2014stats},
termed `region-based' LRPS. In this work, we present a novel analysis
of a region-based LRPS algorithm, MLFriends \citep{Buchner2014stats,Buchner2019c,UltraNest}.
In particular, we derive heuristic bounds on the probability that the constructed 
sub-space for rejection sampling is not a superset of the 
likelihood-restricted prior. We provide the first analytical characterisation of
MLFriends and practically useful reliability estimates for
computer implementations with finite resources.

We first introduce notation and the MLFriends algorithm
in Section~\ref{sec:Materials-and-Methods}. Our main analytical results are
developed in Section~\ref{sec:correctness}.
Section~\ref{sec:efficiency} comments on the acceptance rate, and
Section~\ref{sec:implement} gives details
for an efficient implementation for non-trivial Bayesian models.
Section~\ref{sec:Discussion} discusses the implications of our results.

\section{Preliminaries}\label{sec:Materials-and-Methods}

\subsection{Assumptions}
\label{subsec:Problem-setup}

For the analysis presented in this work, we make several assumptions.
We assume the prior probability density $\pi$ is
defined over a continuous parameter space $\Omega_{\pi}\subseteq\mathbb{R}^{d}$.
Furthermore, we assume the likelihood is free of plateaus: formally,
the pushforward of $\pi$ under ${\cal L}$ is a continuous
distribution, so that for every $L_{\mathrm{min}}$ in the range
$(0,\max_\theta{\cal L}(\theta))$ we have
$\Pr\{{\cal L}(\theta)=L_{\mathrm{min}}\}=0$
\citep[see][for the general case]{Schittenhelm2020,Fowlie2021}.

To simplify our analysis, we assume the space is parameterized such
that the prior is a standard uniform distribution. This also follows
many (but not all) current implementations of nested sampling.
To achieve this, the cumulative distribution function gives the needed reparameterization
to natural probability units. For factorized priors
$\pi(\theta)=\prod_{i=1}^{d}\pi_{i}(\theta_{i})$,
each marginal CDF provides the transformation independently:
\begin{equation}
\theta_{i}'(\theta_{i})=\int_{-\infty}^{\theta_{i}}\pi_{i}(x)\,dx.\label{eq:cdf}
\end{equation}
More generally, non-factorized priors, including correlated multivariate
Gaussians and hierarchical models, can be handled by applying the probability integral transform iteratively using conditional distributions:
\begin{equation}
\theta_{i}'(\theta_{i}|\theta_{1},\ldots,\theta_{i-1})=\int_{-\infty}^{\theta_{i}}\pi(\theta_{i}'|\theta_{1},\ldots,\theta_{i-1})\,d\theta_{i}',\label{eq:cdf-conditional}
\end{equation}
where the ordering of parameters is arbitrary. For example, a multivariate
Gaussian prior can be transformed parameter-by-parameter using its
conditional cumulative probabilities. Similarly, hierarchical models can be
accommodated by conditioning on higher-level parameters.
For this reparameterization to be valid in either case, we require each
conditional density $\pi(\theta_{i}|\theta_{1},\ldots,\theta_{i-1})>0$
almost everywhere on its support, ensuring each conditional CDF is strictly monotone
and hence invertible.
Under this transformation, the prior space becomes the unit hypercube
$\Omega_{\pi'}=[0,1]^{d}$ with volume $|\Omega_{\pi'}|=1$, and the
prior density simplifies to
$\pi'(\theta')=\mathbf{1}(\theta'\in[0,1]^{d})$.
The original parameters can be retrieved by inverting
eqs.~\ref{eq:cdf}--\ref{eq:cdf-conditional}.
For notational simplicity, we drop the distinction between $\theta_{i}'$ and $\theta$
henceforth and write
$\Omega_{\pi}=[0,1]^d$ and $\pi(\theta)=\mathbf{1}(\theta\in[0,1]^d)$.

\subsection{Rejection sampling}

The goal of any LRPS algorithm is to simulate new live points. In
the region-based class of LRPS algorithms \citep[see][for a review]{Buchner2021c},
new points are proposed from the prior within a region
$\Omega_{P}\subseteq\Omega_{\pi}$:
\begin{equation}
\theta^{*}\sim\mathrm{Uniform}(\Omega_{P}).
\end{equation}
The rejection sampling acceptance rule is then simple: if the likelihood
${\cal L}(\theta^{*})$ exceeds the current threshold $L_{\mathrm{min}}$,
the proposed point is accepted.
The efficiency is proportional to how closely $\Omega_{P}$ resembles
the restricted prior space
$\Omega_{r\pi}=\{\theta\in\Omega_{\pi}:{\cal L}(\theta)>L_{\mathrm{min}}\}$:
\begin{equation}
A=\frac{|\Omega_{P}\cap\Omega_{r\pi}|}{|\Omega_{P}|}.
\end{equation}
For the proposal to be valid (i.e.\ to allow sampling from
$\Omega_{r\pi}$), we require $\Omega_{P}\supseteq\Omega_{r\pi}$,
so that $A=|\Omega_{r\pi}|/|\Omega_{P}|$.

\subsection{The MLFriends algorithm}
\label{subsec:Bootstrapping-a-region}

The MLFriends algorithm constructs $\Omega_{P}$ from the current live
points. Let ${\cal I}=\{1,\ldots,K\}$ index the live points
$\{\theta^{1},\ldots,\theta^{K}\}$. After fixing a distance metric
$d(\cdot,\cdot)$, the neighbourhood of live point $j$ at radius $r$
is:
\begin{equation}
\Omega^{j}(r)=\{x\in\Omega_{\pi}:d(x,\theta^{j})<r\}.
\end{equation}
As detailed in Section~\ref{sec:implement}, MLFriends uses a
Mahalanobis distance, but for the analysis here the space can be
affinely transformed so that the Mahalanobis distance becomes a
standard Euclidean distance, without loss of generality.

The core of the MLFriends algorithm is a bootstrap aggregating
(bagging) procedure applied $m$ times to estimate a safe radius $r$.
The bootstrap dataset (training sample) is
a sequence of $K$ indices drawn i.i.d.\ uniformly with replacement
from $\{1,\ldots,K\}$:
\begin{equation}
{\cal I}_{\mathrm{train}}=(i_{1},\ldots,i_{K}),
\quad i_{j}\overset{\mathrm{iid}}{\sim}\mathrm{Uniform}\{1,\ldots,K\}.
\end{equation}
This sequence may contain repeated indices. The corresponding set of
unique indices is:
\begin{equation}
{\cal I}_{\mathrm{train}}^{\mathrm{uniq}}=\{i_{1},\ldots,i_{K}\}
\quad(\text{as a set, without repetition}).
\end{equation}
The validation set consists of live points whose indices do not appear
in ${\cal I}_{\mathrm{train}}^{\mathrm{uniq}}$:
\begin{equation}
{\cal I}_{\mathrm{val}}={\cal I}\setminus{\cal I}_{\mathrm{train}}^{\mathrm{uniq}}.
\end{equation}

\begin{figure}
\centering
\includegraphics[width=0.49\columnwidth,valign=t,trim={0 2.7cm 0 0},clip]{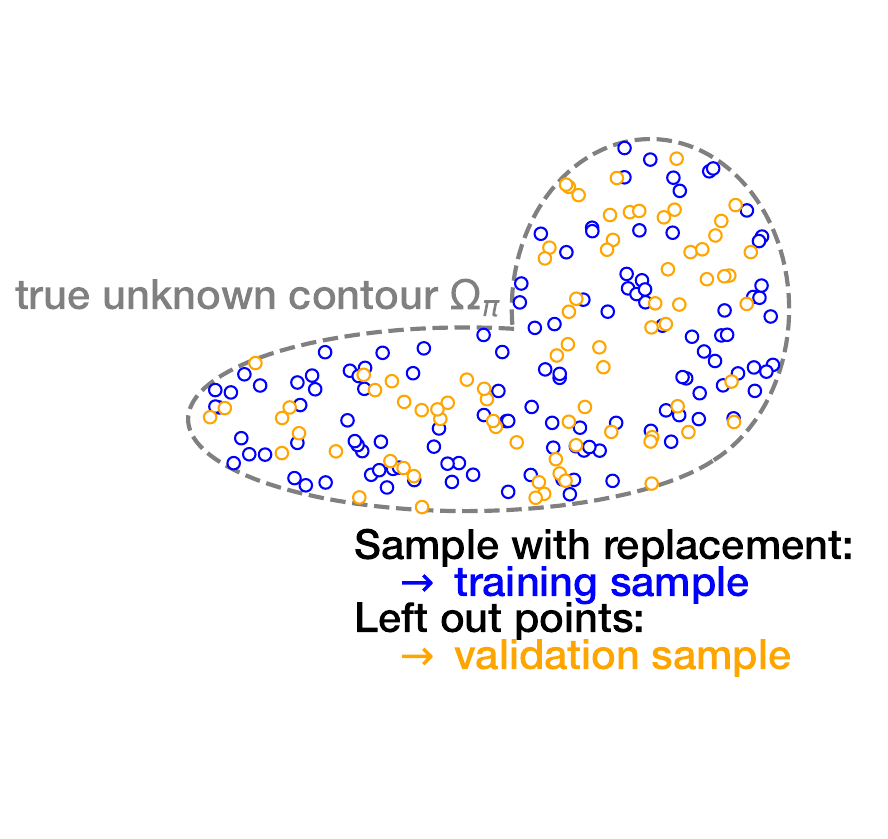}%
\includegraphics[width=0.49\columnwidth,valign=t,trim={0 2.7cm 0 0},clip]{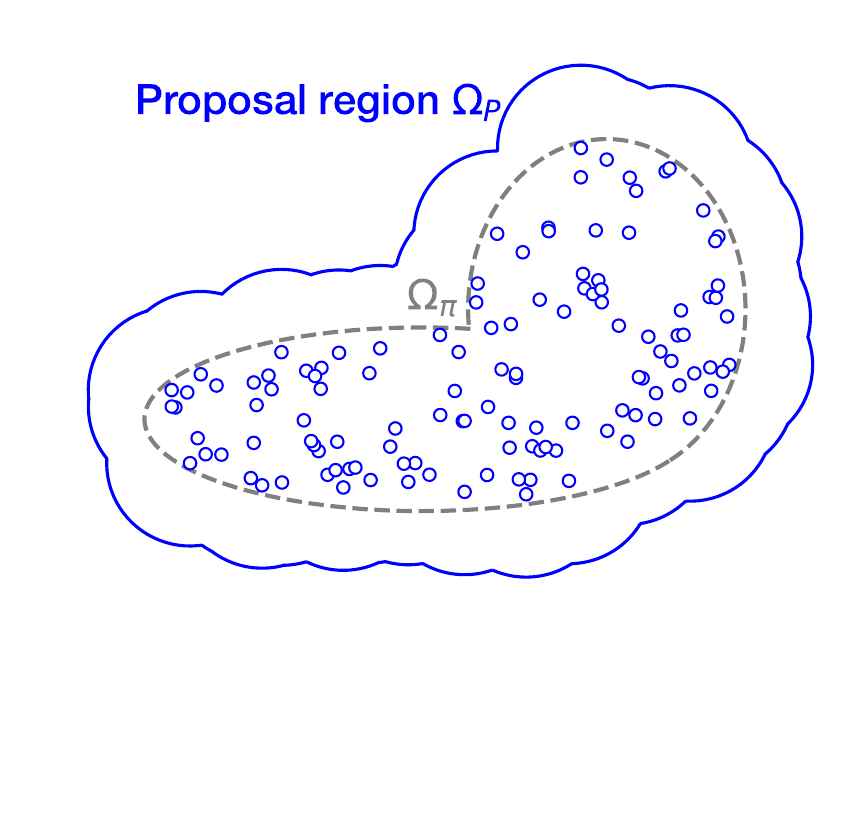}
\caption{Left panel: Illustration of the bagging procedure. The live points were simulated from the likelihood restricted prior, from the set $\Omega_\pi$. Sampling the live points with replacement yields a training sample (blue). The left-out points (orange) form the validation set. Right panel: The constructed proposal region $\Omega_P$ (blue outer contour) must be a superset of $\Omega_\pi$ for faithful nested sampling.}
\end{figure}

A safe cut-off distance $r_{\mathrm{train}}$ is the smallest $r$
such that every validation point lies within $\Omega^{j}(r)$ for
at least one training point $j\in{\cal I}_{\mathrm{train}}^{\mathrm{uniq}}$:
\begin{equation}
r_{\mathrm{train}}\!\left({\cal I}_{\mathrm{train}}\right)=
\begin{cases}
\displaystyle\max_{i\in{\cal I}\setminus{\cal I}_{\mathrm{train}}^{\mathrm{uniq}}}
\min_{j\in{\cal I}_{\mathrm{train}}^{\mathrm{uniq}}}
d\!\left(\theta^{i},\theta^{j}\right)
& \text{if } {\cal I}_{\mathrm{val}}\neq\emptyset,\\
0 & \text{if } {\cal I}_{\mathrm{val}}=\emptyset.
\end{cases}
\end{equation}
The final radius is the maximum over $m$ independent bootstrap
realisations:
\begin{equation}
r_{\mathrm{max}}=\max_{l=1,\ldots,m}r_{\mathrm{train}}^{l},
\end{equation}
where each $r_{\mathrm{train}}^{l}$ is computed from an independently
drawn training sequence. The proposal region is then the union of
balls of radius $r_{\mathrm{max}}$ centred at each live point:
\begin{equation}
\Omega_{P}=\left\{\theta\in\Omega_{\pi}:
\min_{j\in{\cal I}}d(\theta,\theta^{j})<r_{\mathrm{max}}\right\}.
\label{eq:OmegaP}
\end{equation}

\subsection{Binomial point process}
\label{subsec:ppp}

We model the $K$ live points as distributed uniformly, 
independently, and homogeneously (i.e., with equal density 
throughout) within $\Omega_{r\pi}$ with volume $V$. Under this model, the probability
that none of the $K$ points falls within Euclidean distance $r$ of
a fixed point $x\in\Omega_{r\pi}$ is exactly:
\begin{equation}
\Pr(\text{no point within }r)
=\left(1-\frac{V_d r^d}{V}\right)^{K},
\label{eq:binomial-cdf-main}
\end{equation}
where $V_{d}$ is the volume of the $d$-dimensional unit ball:
\begin{equation}
V_{d}=\frac{\pi^{d/2}}{\Gamma\!\left(\frac{d}{2}+1\right)}.
\end{equation}
This is the exact nearest-neighbour CDF for a Binomial point process
of $K$ uniform points in volume $V$.

Equation~\ref{eq:binomial-cdf-main} is exact under
Assumption~\ref{ass:interior}.  Boundary effects are excluded by Assumption~\ref{ass:interior}, and the
multi-modal case is not treated analytically here; in practice it is handled
by the cluster detection of Section~\ref{sec:implement}; consequences
for correctness and efficiency are stated there.

The analysis below assumes $\Omega_{r\pi}$ satisfies
Assumption~\ref{ass:interior}.

\section{Sampling correctly}\label{sec:correctness}

In this section, we analyse the reliability of the MLFriends proposal
region. We begin with a few clarifications.

\begin{remark}[Interior-ball analysis]
\label{ass:interior}
For all radii $r$ relevant to the analysis (i.e.\ $r \le r_{\mathrm{max}}$), we treat every ball of radius $r$ centred at a point in the likelihood-restricted prior region set $\Omega_{r\pi}$ as lying entirely within $\Omega_{r\pi}$.
Boundary effects---where a ball of radius $r$ extends outside
$\Omega_{r\pi}$---reduce the effective number of live points within
the ball, which tends to \emph{increase} $r_{\mathrm{max}}$ and
hence enlarge $\Omega_P$. The interior-ball analysis therefore gives
a conservative bound on $P^{\mathrm{missed}}$ (an overestimate of
the uncovered fraction) and an optimistic bound on the acceptance
rate (an underestimate of the true rate); these directions are
consistent throughout and stated without repetition hereafter.
\end{remark}

\begin{remark}[Multi-modality]
The above does not require a connected set. The multi-modal case and the ability to detect disconnected sets (clusters) is described in Section~\ref{sec:implement}.
\end{remark}

\begin{assumption}[Independence of nearest-neighbour distances]
\label{ass:indep}
The nearest-neighbour distances from the shared training set to each
validation point are treated as mutually independent.  Because all
distances are computed from the same $\bar{k}$ training points, this
is an approximation whose error is not characterised here.  All
results labelled \emph{Heuristic} depend on this assumption; a fully
rigorous treatment is left for future work.
\end{assumption}

Because the analysis rests on Assumptions~\ref{ass:interior}
and~\ref{ass:indep}, and on the approximate independence of bootstrap
rounds (Heuristic~\ref{le2}), results that depend on these
non-rigorous steps are labelled \emph{Heuristic} rather than
\emph{Theorem}.  A fully rigorous treatment, with explicit error
bounds as functions of $\epsilon$, $m$, $K$ and the geometry of
$\Omega_{r\pi}$, is left for future work.

\begin{lemma}\label{le0}
Let $\bar{k}=|{\cal I}_{\mathrm{train}}^{\mathrm{uniq}}|$ and
$n_v=|{\cal I}_{\mathrm{val}}|$ be fixed. Under the assumption that
the $K$ live points are distributed uniformly and independently in
$\Omega_{r\pi}$, and that the nearest-neighbour distances from the
training set to each validation point are independent, the expected $d$-th power of the bootstrap radius is:
\begin{equation}
E\!\left[r_{\mathrm{train}}^{d}\right]
= \frac{V}{V_d}\left(1 -
\frac{\Gamma(1+1/\bar{k})\,\Gamma(n_v+1)}
     {\Gamma(n_v+1+1/\bar{k})}\right).
\label{eq:rtrain-exact}
\end{equation}
For $\bar{k}\gg 1$ and $\ln(n_v+1)/\bar{k}\ll 1$, this simplifies to:
\begin{equation}
E\!\left[r_{\mathrm{train}}^{d}\right]
\approx \frac{\ln(n_v+1)}{\bar{k}}\times\frac{V}{V_d}.
\label{eq:rtrain-approx}
\end{equation}

The lemma is exact under Assumption~\ref{ass:indep} as stated in Section~\ref{sec:correctness}.
Relaxing Assumption~\ref{ass:indep} would require the joint
distribution of nearest-neighbour distances from a shared training
set to multiple validation points, which is left for future work.
\end{lemma}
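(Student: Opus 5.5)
Throughout, work in the rescaled variable $u = V_d r^d / V$, so that $E[r_{\mathrm{train}}^d] = \frac{V}{V_d}E[u_{\mathrm{train}}]$ and it suffices to compute $E[u_{\mathrm{train}}]$.

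\textbf{Step 1: one validation point.} Fix a validation point $\theta^i$ and let $D_i = \min_{j\in{\cal I}_{\mathrm{train}}^{\mathrm{uniq}}} d(\theta^i,\theta^j)$. By eq.~\ref{eq:binomial-cdf-main}, applied with the $\bar{k}$ training points in place of $K$ and valid under the interior-ball treatment of Remark~\ref{ass:interior},
\[
\Pr(D_i \ge r) = (1-u)^{\bar{k}}, \qquad u\in[0,1].
\]
Hence $U_i = V_d D_i^d/V$ has CDF $F(u) = 1-(1-u)^{\bar{k}}$, which is the $\mathrm{Beta}(1,\bar{k})$ distribution.

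\textbf{Step 2: the maximum over validation points.} Assumption~\ref{ass:indep} makes $U_1,\ldots,U_{n_v}$ i.i.d. Their maximum $U = V_d r_{\mathrm{train}}^d/V$ therefore has CDF $F(u)^{n_v}$, and
\[
E[U] = \int_0^1 \left(1 - F(u)^{n_v}\right) du = 1 - \int_0^1 \left(1-(1-u)^{\bar{k}}\right)^{n_v} du .
\]
(The case $n_v=0$ gives $r_{\mathrm{train}}=0$, consistent with the formula, since the Gamma ratio then equals $1$.)

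\textbf{Step 3: evaluate the integral.} Substitute $t = (1-u)^{\bar{k}}$, so $u = 1-t^{1/\bar{k}}$ and $du = -\frac{1}{\bar{k}}\, t^{1/\bar{k}-1}\, dt$. The integral becomes
\[
\frac{1}{\bar{k}}\int_0^1 t^{1/\bar{k}-1}(1-t)^{n_v}\, dt = \frac{1}{\bar{k}}\, B\!\left(\tfrac{1}{\bar{k}},\, n_v+1\right).
\]
Using $\frac{1}{\bar{k}}\Gamma(1/\bar{k}) = \Gamma(1+1/\bar{k})$, this equals
\[
\frac{\Gamma(1+1/\bar{k})\,\Gamma(n_v+1)}{\Gamma(n_v+1+1/\bar{k})}.
\]
Multiplying $E[U]$ by $V/V_d$ gives eq.~\ref{eq:rtrain-exact}.

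\textbf{Step 4: asymptotics.} Put $\epsilon = 1/\bar{k}$. Then:
\begin{itemize}
\item $\Gamma(1+\epsilon) = 1 - \gamma\epsilon + O(\epsilon^2)$.
\item $\Gamma(n_v+1)/\Gamma(n_v+1+\epsilon) = \exp\!\left(-\epsilon\,\psi(n_v+1) + O(\epsilon^2\psi'(n_v+1))\right)$.
\end{itemize}
Since $\psi(n_v+1) = H_{n_v} - \gamma$, the Euler constants cancel and the product is $\exp(-\epsilon H_{n_v})\,(1+O(\epsilon^2))$. Therefore
\[
1 - \frac{\Gamma(1+\epsilon)\,\Gamma(n_v+1)}{\Gamma(n_v+1+\epsilon)} \approx 1 - e^{-H_{n_v}/\bar{k}} \approx \frac{H_{n_v}}{\bar{k}},
\]
where the last step uses $H_{n_v}/\bar{k} \ll 1$. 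Replacing $H_{n_v} \approx \ln(n_v+1)$ gives eq.~\ref{eq:rtrain-approx}; the error in this replacement is bounded by $\gamma$, which is lower order relative to the logarithm when $n_v$ is large.

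\textbf{Main obstacle.} The only delicate part is controlling the approximation in Step 4. The difference $H_{n_v}-\ln(n_v+1)$ is $O(1)$ rather than vanishing, so the approximation is a leading-order statement; I will phrase it that way.
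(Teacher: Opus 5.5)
Your proof is correct, and for the exact formula it is the paper's argument up to a change of variables. The paper writes $V_d r^d/V = 1-(1-u)^{1/\bar{k}}$ with $u$ uniform, takes $u_{\max}\sim\mathrm{Beta}(n_v,1)$, and evaluates $E[(1-u_{\max})^{1/\bar{k}}]=n_v B(1+1/\bar{k},n_v)$. You instead apply the tail formula to the CDF $F^{n_v}$ of the maximum and obtain $\frac{1}{\bar{k}}B(1/\bar{k},n_v+1)$. The two Beta integrals are related by a single integration by parts. Your check of the $n_v=0$ case is a small bonus, since the paper's intermediate $\mathrm{Beta}(n_v,1)$ step is degenerate there.

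Where you go beyond the paper is the asymptotic step. The paper sets $\Gamma(1+1/\bar{k})\approx 1$ and $\Gamma(n_v+1)/\Gamma(n_v+1+1/\bar{k})\approx (n_v+1)^{-1/\bar{k}}$. But the discarded factor $\Gamma(1+1/\bar{k})\approx 1-\gamma/\bar{k}$ shifts the result by $\gamma/\bar{k}$, which is of the same order in $1/\bar{k}$ as the term that is kept. Your digamma expansion shows that this $\gamma$ cancels against the one in $\psi(n_v+1)=H_{n_v}-\gamma$, so the true first-order coefficient is $H_{n_v}$, not $\ln(n_v+1)$.

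Consequently the paper's eq.~\ref{eq:rtrain-approx} carries a relative error of about $\gamma/\ln(n_v+1)$. This decays only logarithmically in $K$ and is close to $10\%$ at $K\sim 10^3$. Stating eq.~\ref{eq:rtrain-approx} as a statement that is leading order in $\ln n_v$, as you plan, is the accurate reading. The exact expression eq.~\ref{eq:rtrain-exact} is unaffected.
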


\begin{proof}
Since the $K$ live points are uniform and independent in $\Omega_{r\pi}$
with volume $V$, the probability that none of the $\bar{k}$ training
points falls within Euclidean distance $r$ of a fixed validation
point is exactly:
\begin{equation}
\Pr(\text{no training point within }r)
=\left(1-\frac{V_d r^d}{V}\right)^{\bar{k}},
\label{eq:binomial-cdf}
\end{equation}
provided $V_d r^d \leq V$. This is the nearest-neighbour CDF for a Binomial point process of
$\bar{k}$ uniform points in volume $V$, exact under
Assumption~\ref{ass:interior}.
Inverting, we find $r^d = \frac{V}{V_d}(1-(1-u)^{1/\bar{k}})$ where
$u\sim\mathrm{Uniform}(0,1)$.

The training radius $r_{\mathrm{train}}$ is the maximum
nearest-neighbour distance over all $n_v$ validation points.
Under the assumption that these distances are independent,
$u_{\mathrm{max}}\sim\mathrm{Beta}(n_v,1)$ with density
$f(u)=n_v u^{n_v-1}$ on $[0,1]$. The expectation is:
\begin{equation}
E\!\left[r_{\mathrm{train}}^d\right]
= \frac{V}{V_d}\left(1 - E\!\left[(1-u_{\mathrm{max}})^{1/\bar{k}}\right]\right).
\end{equation}
Since $1-u_{\mathrm{max}}$ has density $n_v(1-w)^{n_v-1}$ on $[0,1]$:
\begin{equation}
E\!\left[(1-u_{\mathrm{max}})^{1/\bar{k}}\right]
= \int_0^1 w^{1/\bar{k}}\,n_v(1-w)^{n_v-1}\,dw
= n_v\,B\!\left(1+\tfrac{1}{\bar{k}},\,n_v\right)
= \frac{\Gamma(1+1/\bar{k})\,\Gamma(n_v+1)}{\Gamma(n_v+1+1/\bar{k})},
\end{equation}
where we used $n_v\,B(1+1/\bar{k},n_v) = n_v\,\Gamma(1+1/\bar{k})\Gamma(n_v)/\Gamma(n_v+1+1/\bar{k}) = \Gamma(1+1/\bar{k})\Gamma(n_v+1)/\Gamma(n_v+1+1/\bar{k})$.
This establishes eq.~\ref{eq:rtrain-exact}.

Although $r_{\mathrm{max}}$ is a random variable, substituting
$E[r_{\mathrm{max}}^d]$ into the expression for
$\Pr(\theta\notin\Omega_P)$ is not merely an approximation: since
$x\mapsto(1-V_d x/V)^K$ is convex in $x$ for $x\in[0,V/V_d]$
and $K\geq1$, Jensen's inequality gives
\begin{equation}
P^{\mathrm{missed}}
= E\!\left[\left(1-\frac{V_d r_{\mathrm{max}}^d}{V}\right)^K\right]
\;\leq\;
\left(1-\frac{V_d\,E[r_{\mathrm{max}}^d]}{V}\right)^K,
\end{equation}
so eq.~\ref{eq:Pmissed-exact} is a rigorous upper bound on
$P^{\mathrm{missed}}$ given the value of $E[r_{\mathrm{max}}^d]$
from Heuristic~\ref{le2}.  The plug-in step therefore strengthens
rather than weakens the result: the true uncovered fraction is at
most the quantity we compute.

For the large-$K$ approximation, note that $\Gamma(1+1/\bar{k})\to 1$
as $\bar{k}\to\infty$, and using the asymptotic expansion
$\ln\Gamma(n+a)-\ln\Gamma(n)\approx a\ln n$ for large $n$:
\begin{equation}
\frac{\Gamma(n_v+1)}{\Gamma(n_v+1+1/\bar{k})}
\approx (n_v+1)^{-1/\bar{k}},
\end{equation}
so that:
\begin{equation}
E\!\left[r_{\mathrm{train}}^d\right]
\approx \frac{V}{V_d}\left(1-(n_v+1)^{-1/\bar{k}}\right)
\approx \frac{V}{V_d}\cdot\frac{\ln(n_v+1)}{\bar{k}},
\end{equation}
where the approximations hold for $\bar{k}\gg 1$ and 
$\ln(n_v+1)/\bar{k}\ll 1$, both satisfied for practical $K$.
The last step uses $1-e^{-x}\approx x$ for small
$x=\ln(n_v+1)/\bar{k}$, valid when $n_v\ll e^{\bar{k}}$.

\end{proof}

\begin{lemma}\label{le1}
The number of unique indices $|{\cal I}_{\mathrm{train}}^{\mathrm{uniq}}|=s$ 
in a bootstrap sample of size $K$ drawn with replacement from 
$\{1,\ldots,K\}$ follows the distribution:
\begin{equation}
\Pr(|{\cal I}_{\mathrm{train}}^{\mathrm{uniq}}|=s)
=\frac{S_2(K,s)\,K!}{K^{K}\,(K-s)!},
\end{equation}
where $S_2(K,s)$ is the Stirling number of the second kind.
The expected number of unique indices is:
\begin{equation}
E\!\left[|{\cal I}_{\mathrm{train}}^{\mathrm{uniq}}|\right]
=K\!\left(1-\left(1-\tfrac{1}{K}\right)^{K}\right).
\end{equation}
The variance is:
\begin{equation}
\mathrm{Var}\!\left(|{\cal I}_{\mathrm{train}}^{\mathrm{uniq}}|\right)
=K\!\left(1-\tfrac{1}{K}\right)^{K}
+K(K-1)\!\left(1-\tfrac{2}{K}\right)^{K}
-K^{2}\!\left(1-\tfrac{1}{K}\right)^{2K}.
\end{equation}
\end{lemma}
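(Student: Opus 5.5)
The plan is to treat the bootstrap sequence $(i_1,\ldots,i_K)$ as a uniformly random function from $\{1,\ldots,K\}$ (positions) to $\{1,\ldots,K\}$ (labels). There are $K^K$ such functions, all equally likely. The distribution of the number of unique indices is then a counting problem: count the functions whose image has size exactly $s$. Such a function is specified in two steps. First, partition the $K$ positions into $s$ nonempty unlabelled blocks, which can be done in $S_2(K,s)$ ways. Second, assign distinct labels to the blocks injectively, which can be done in $K(K-1)\cdots(K-s+1)=K!/(K-s)!$ ways. Multiplying these counts and dividing by $K^K$ gives the stated probability mass function. As a sanity check, summing over $s$ recovers the identity $\sum_s S_2(K,s)\,K^{\underline{s}}=K^K$.

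For the moments I would not use the Stirling form. Instead I would write the count as a sum of indicators,
\begin{equation}
|{\cal I}_{\mathrm{train}}^{\mathrm{uniq}}| = K-\sum_{j=1}^{K} Z_j,\qquad Z_j=\mathbf{1}\{j\notin\{i_1,\ldots,i_K\}\}.
\end{equation}
Because the draws are i.i.d. uniform, $\Pr(Z_j=1)=(1-1/K)^K$. It follows that $E[|{\cal I}_{\mathrm{train}}^{\mathrm{uniq}}|]=K-K(1-1/K)^K$, which is the claimed mean.

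For the variance, note that $\mathrm{Var}(K-\sum_j Z_j)=\mathrm{Var}(\sum_j Z_j)$. Expanding,
\begin{equation}
E\Bigl[\bigl(\textstyle\sum_j Z_j\bigr)^2\Bigr]=\sum_j E[Z_j]+\sum_{j\neq l}E[Z_jZ_l],
\end{equation}
where $Z_j^2=Z_j$ was used for the diagonal terms. For $j\neq l$, the event that both labels are missed requires every draw to avoid two specific labels, so it has probability $(1-2/K)^K$. The second moment is therefore $K(1-1/K)^K+K(K-1)(1-2/K)^K$. Subtracting the squared mean $K^2(1-1/K)^{2K}$ gives exactly the stated expression.

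I expect no serious obstacle; the only step needing care is the combinatorial count in the first part. In particular, the blocks must be unlabelled so that $S_2$ applies, and the label assignment must be injective; together these ensure each function is counted exactly once. The indicator approach sidesteps any need to manipulate Stirling-number sums for the moments.
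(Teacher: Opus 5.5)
Your proposal is correct and follows essentially the same route as the paper: the paper counts $\binom{K}{s}$ choices of the label set times $s!\,S_2(K,s)$ surjections, which equals your $S_2(K,s)\,K!/(K-s)!$, and it derives the mean and variance from the same indicators of omitted labels with $\Pr(Y_j=1)=(1-1/K)^K$ and $\Pr(Y_j=1,Y_{j'}=1)=(1-2/K)^K$. Your explicit second-moment expansion is simply a spelled-out version of the paper's appeal to the variance formula for a sum of exchangeable Bernoulli variables.
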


\begin{proof}
The distribution of $|{\cal I}_{\mathrm{train}}^{\mathrm{uniq}}|$ 
is an instance of the classical occupancy problem, in which $K$ 
balls are thrown independently and uniformly into $K$ bins.
The number of ways to choose which $s$ distinct indices appear is 
$\binom{K}{s}$. The number of ways to assign $K$ draws to exactly 
those $s$ indices such that each appears at least once is 
$s!\,S_2(K,s)$, the number of surjections from $K$ draws onto $s$ 
labelled bins. Dividing by the total number of samples $K^K$ gives 
the stated distribution. The mean and variance follow from the 
indicator representation: letting 
$Y_j=\mathbf{1}(j\notin{\cal I}_{\mathrm{train}}^{\mathrm{uniq}})$,
each $Y_j\sim\mathrm{Bernoulli}((1-1/K)^K)$ and for $j\neq j'$,
$\Pr(Y_j=1,Y_{j'}=1)=(1-2/K)^K$, giving the stated mean and 
variance via $|{\cal I}_{\mathrm{train}}^{\mathrm{uniq}}|=K-\sum_j Y_j$
and the formula for the variance of a sum of exchangeable 
Bernoulli random variables.
\end{proof}

\begin{remark}
Since $(1-1/K)^{K}\to e^{-1}$ as $K\to\infty$:
\begin{equation}
\lim_{K\to\infty}\frac{E\!\left[|{\cal I}_{\mathrm{train}}^{\mathrm{uniq}}|\right]}{K}
=1-e^{-1}\approx0.6321,
\qquad
\lim_{K\to\infty}\frac{\mathrm{Var}\!\left(|{\cal I}_{\mathrm{train}}^{\mathrm{uniq}}|\right)}{K}
=e^{-1}-3e^{-2}+2e^{-3}\approx 0.105.
\end{equation}
The standard deviation of $|{\cal I}_{\mathrm{train}}^{\mathrm{uniq}}|$
is therefore $O(\sqrt{K})$, so the coefficient of variation
$\mathrm{std}/\mathrm{mean} = O(K^{-1/2})\to 0$. Consequently,
$|{\cal I}_{\mathrm{train}}^{\mathrm{uniq}}|\approx\frac{2}{3}K$ and
$|{\cal I}_{\mathrm{val}}|\approx\frac{1}{3}K$ concentrate around their exact means for large $K$, with relative fluctuations
of order $O(K^{-1/2})$. In Heuristic~\ref{le2} we additionally replace the
asymptotic constants $1-e^{-1}\approx 0.632$ and $e^{-1}\approx 0.368$ by
the simpler values $2/3$ and $1/3$, introducing a separate small constant
approximation error.
\end{remark}

\begin{heuristic}\label{le2}
The expected MLFriends radius for $K\gg3$ is:
\begin{equation}
E\!\left[r_{\mathrm{max}}^{d}\right]\approx
\frac{\ln\!\left(\frac{1}{3}Km\right)}{\frac{2}{3}K}
\times\frac{V}{V_{d}}.
\label{eq:rmax-approx}
\end{equation}
The independence of bootstrap rounds is an approximation.
Intuitively, a large training radius in one round (indicating sparse
coverage) suggests large radii in other rounds, because all rounds
draw from the same $K$ live points.  This positive dependence
suggests that the true distribution of $r_{\mathrm{max}}$ may be
stochastically smaller than under independence, so the independence
approximation may overestimate $r_{\mathrm{max}}$ and hence give a
larger proposal region.  If this is the case, the resulting bound on
$P^{\mathrm{missed}}$ is conservative (an overestimate), consistent
with Assumption~\ref{ass:interior}.  We do not prove this stochastic
ordering; it is noted as a plausibility argument and left for future
work.
\end{heuristic}

\begin{proof}[Derivation of Heuristic~\ref{le2}]
We use the exact expected training set size $\bar{k}$ and validation set size $n_v$ from Lemma~\ref{le1}:
\begin{equation}
\bar{k} = K\!\left(1-\left(1-\tfrac{1}{K}\right)^K\right),
\qquad
n_v = K\!\left(1-\tfrac{1}{K}\right)^K.
\end{equation}
Since $|{\cal I}_{\mathrm{train}}^{\mathrm{uniq}}|$ and
$|{\cal I}_{\mathrm{val}}|$ concentrate around $\bar{k}$ and $n_v$
with coefficient of variation $O(K^{-1/2})$ (Lemma~\ref{le1},
Remark), we substitute these means into the exact expression of
Lemma~\ref{le0} (eq.~\ref{eq:rtrain-exact}):
\begin{equation}
E\!\left[r_{\mathrm{train}}^{d}\right]
\approx \frac{V}{V_d}\left(1 -
\frac{\Gamma(1+1/\bar{k})\,\Gamma(n_v+1)}
     {\Gamma(n_v+1+1/\bar{k})}\right).
\end{equation}
For the maximum over $m$ independent rounds, under the independence
approximation the $m\,n_v$ validation distances are i.i.d., so
$u_{\mathrm{max}}\sim\mathrm{Beta}(m\,n_v,1)$. Applying
Lemma~\ref{le0} with effective validation size $m\,n_v$ gives:
\begin{equation}
E\!\left[r_{\mathrm{max}}^{d}\right]
= \frac{V}{V_d}\left(1 -
\frac{\Gamma(1+1/\bar{k})\,\Gamma(m\,n_v+1)}
     {\Gamma(m\,n_v+1+1/\bar{k})}\right).
\label{eq:rmax-exact}
\end{equation}

For the large-$K$ approximation, substitute
$\bar{k}\approx\frac{2}{3}K$ and $n_v\approx\frac{1}{3}K$,
use $\Gamma(1+1/\bar{k})\approx 1$ and
$\Gamma(m\,n_v+1)/\Gamma(m\,n_v+1+1/\bar{k})
\approx(m\,n_v+1)^{-1/\bar{k}}$,
and apply $1-(m\,n_v+1)^{-1/\bar{k}}
\approx\ln(m\,n_v+1)/\bar{k}
\approx\ln(\frac{1}{3}Km)/(\frac{2}{3}K)$
to obtain eq.~\ref{eq:rmax-approx}. Note that
$m n_v + 1 \approx \frac{1}{3}Km$ for large $K$.
\end{proof}

\begin{heuristic}\label{te1}
Under the assumption that live points are distributed uniformly
and independently in $\Omega_{r\pi}$, and using the radius
from Heuristic~\ref{le2} (exact expression eq.~\ref{eq:rmax-exact}),
an upper bound on the expected fraction of prior mass in $\Omega_{r\pi}$ not covered
by $\Omega_{P}$ is:
\begin{equation}
P^{\mathrm{missed}}
\;\leq\; \left(\frac{\Gamma(1+1/\bar{k})\,\Gamma(mn_v+1)}
             {\Gamma(mn_v+1+1/\bar{k})}\right)^K,
\label{eq:Pmissed-exact}
\end{equation}
where $\bar{k}=K(1-(1-1/K)^K)$ and $n_v=K(1-1/K)^K$.
For $K\gg3$, this simplifies to:
\begin{equation}
P^{\mathrm{missed}}\approx\left(\tfrac{1}{3}Km\right)^{-3/2}.
\label{eq:Pmissed-approx}
\end{equation}
This result is derived under Assumptions~\ref{ass:interior}
and~\ref{ass:indep}, and inherits the approximate independence of
bootstrap rounds from Heuristic~\ref{le2}.
\end{heuristic}

\begin{proof}[Derivation of Heuristic~\ref{te1}]
Since the $K$ live points are uniform and independent in
$\Omega_{r\pi}$ with volume $V$, and under
Assumption~\ref{ass:interior}, the probability that none falls
within distance $r$ of a fixed interior point
$\theta\in\Omega_{r\pi}$ is exactly:
\begin{equation}
\Pr(\theta\notin\Omega_{P}\mid r_{\mathrm{max}}=r)
=\left(1-\frac{V_d r^d}{V}\right)^K.
\end{equation}
The random variable $r_{\mathrm{max}}$ has expected $d$-th power
given by eq.~\ref{eq:rmax-exact}.  Because
$x\mapsto(1-V_d x/V)^K$ is convex in $x$, a direct plug-in of
$E[r_{\mathrm{max}}^d]$ overestimates $E[\Pr(\theta\notin\Omega_P)]$
by Jensen's inequality.  We therefore use the plug-in value as an
upper bound:
\begin{equation}
P^{\mathrm{missed}}
:= E\!\left[\frac{|\Omega_{r\pi}\setminus\Omega_P|}{|\Omega_{r\pi}|}\right]
\;\leq\;
\left(1-\frac{V_d\,E[r_{\mathrm{max}}^d]}{V}\right)^K.
\end{equation}
Substituting eq.~\ref{eq:rmax-exact} gives:
\begin{equation}
\frac{V_d\,E[r_{\mathrm{max}}^d]}{V}
= 1 - \frac{\Gamma(1+1/\bar{k})\,\Gamma(mn_v+1)}
           {\Gamma(mn_v+1+1/\bar{k})},
\end{equation}
and therefore:
\begin{equation}
\Pr(\theta\notin\Omega_{P})
=\left(\frac{\Gamma(1+1/\bar{k})\,\Gamma(mn_v+1)}
            {\Gamma(mn_v+1+1/\bar{k})}\right)^K.
\end{equation}
Under Assumption~\ref{ass:interior}, this probability is the same
for every interior point $\theta\in\Omega_{r\pi}$.
By Assumption~\ref{ass:interior}, boundary effects increase
$r_{\mathrm{max}}$ and reduce the uncovered fraction relative to
the interior bound, so the bound is conservative globally.
Integrating over $\Omega_{r\pi}$ with uniform prior density
establishes eq.~\ref{eq:Pmissed-exact} as an upper bound on
$P^{\mathrm{missed}}$.

For the large-$K$ approximation, the large-$K$ limit of
Heuristic~\ref{le2} gives
$V_d r_{\mathrm{max}}^d / V \approx
\ln(\frac{1}{3}Km)/(\frac{2}{3}K)$, so:
\begin{equation}
\Pr(\theta\notin\Omega_{P})
\approx\left(1-\frac{\ln(\frac{1}{3}Km)}{\frac{2}{3}K}\right)^K
\approx\exp\!\left(-\frac{3}{2}\ln\!\left(\tfrac{1}{3}Km\right)\right)
=\left(\tfrac{1}{3}Km\right)^{-3/2},
\end{equation}
where the second approximation uses $(1-x/K)^K\approx e^{-x}$
for large $K$, with $x=\frac{3}{2}\ln(\frac{1}{3}Km)$.
\end{proof}

With the approximations above, $P^{\mathrm{missed}}\approx(\frac{1}{3}Km)^{-3/2}$
is below $10^{-3}$ for $Km\gtrsim 300$.
The heuristic bound on $P^{\mathrm{missed}}$ depends only on $K$
and $m$, not directly on the dimensionality $d$; however,
the underlying independence assumptions become less accurate in
higher dimensions where boundary effects are more pronounced.

\citet{Buchner2014stats} proposed setting the number of bootstrap
rounds $m$ through an acceptable probability $\epsilon$ that a live
point was never in the validation set. They derived a corresponding
conservative $m$:
\begin{equation}
m=\left\lfloor\frac{\ln\epsilon-\ln K}
{\ln\!\left(1-\left(1-\frac{1}{K}\right)^{K}\right)}\right\rfloor.
\end{equation}
Here $\epsilon$ can be set to $\epsilon=1/N$ with $N$ the number of
times the LRPS is expected to construct the region. Practical
implementations typically use $m$ of the order of 20. 
When using $\epsilon=10^{-6}$ and $K=1000$ (giving $m\approx20$), we find
$P^{\mathrm{missed}}\approx2\times10^{-7}$, indicating that the
formula for $m$ given $\epsilon$ is indeed slightly more conservative.

\begin{proposition}\label{prop:volume}
Under the approximations of Lemma~\ref{le0} and Heuristics~\ref{le2}--\ref{te1},
the cumulative log-volume bias introduced by MLFriends's finite proposal region
over $N$ nested sampling iterations is bounded above by
$N\left(\tfrac{1}{3}Km\right)^{-3/2}$,
which is negligible compared to the inherent statistical standard deviation
$\sqrt{N/K}$ of the log-volume estimate, provided
$K^{1/2}\,m^{3/2} \gg \sqrt{N}\cdot 3^{3/2}$.
\end{proposition}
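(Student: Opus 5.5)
The plan is to track how one iteration's uncovered fraction shifts the per-iteration log-shrinkage, sum this over $N$ iterations, and compare the total with the Poisson-type scatter of the log-volume estimate. Throughout I condition on the region $\Omega_P$ constructed at iteration $i$.

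\textbf{Per-iteration bias.} Let $X_i$ be the true prior volume enclosed by $\Omega_{r\pi}$ at iteration $i$. A faithful LRPS draws the replacement point uniformly from $\Omega_{r\pi}$. MLFriends instead draws it uniformly from $\Omega_{r\pi}\cap\Omega_P$. This set has volume $X_i(1-f_i)$, where $f_i=|\Omega_{r\pi}\setminus\Omega_P|/|\Omega_{r\pi}|$ is the uncovered fraction.

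The nested sampling estimator assumes the $K$ live points are uniform in a region of volume $X_i$. At worst, the new point behaves as if the enclosing volume were $X_i(1-f_i)$. The induced error in $\ln X$ from this iteration is therefore at most $-\ln(1-f_i)\approx f_i$ for small $f_i$. Taking expectations over the randomness of $r_{\max}$ and using Heuristic~\ref{te1}, I get $E[f_i]=P^{\mathrm{missed}}\le(\tfrac13Km)^{-3/2}$. This holds uniformly in $i$ under Assumption~\ref{ass:interior}, since the bound depends only on $K$ and $m$.

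A more careful version notes that the distortion affects only one of the $K$ points. It therefore enters the shrinkage of the minimum with a weight at most $1$, so bounding it by $f_i$ is conservative. I would also use $-\ln(1-f)\le f/(1-f)$ to keep the inequality rigorous given the heuristic input.

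\textbf{Summation.} By linearity of expectation, the cumulative bias is $\sum_{i=1}^N E[-\ln(1-f_i)]\lesssim N(\tfrac13Km)^{-3/2}$. Two points need care here. First, the bias can propagate, because a mis-sampled point stays live for several iterations. Second, volume errors compound multiplicatively, so they add in log space. I would argue that each missed fraction enters $\ln X$ additively exactly once, because the log-volume after $N$ iterations telescopes as a sum of per-iteration log-shrinkages. This gives the stated upper bound $N(\tfrac13Km)^{-3/2}$.

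\textbf{Comparison with statistical scatter.} Each shrinkage $t_i\sim\mathrm{Beta}(K,1)$ has $\mathrm{Var}(\ln t_i)=1/K^2$. Summing $N$ independent terms gives a variance of order $N/K^2$. The paper's statement instead uses the standard deviation $\sqrt{N/K}$, which is the usual nested sampling figure when the number of iterations needed to reach the bulk is expressed through the information; I would adopt it as stated. The bias is negligible when $N(\tfrac13Km)^{-3/2}\ll\sqrt{N/K}$. This rearranges to $\sqrt{N}\,3^{3/2}\ll K^{3/2}m^{3/2}K^{-1/2}=K\,m^{3/2}$.

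\textbf{Main obstacle.} Matching the proposition's stated condition $K^{1/2}m^{3/2}\gg\sqrt N\,3^{3/2}$ exactly is the delicate step, since the algebra above gives the $K$-power $K$ rather than $K^{1/2}$. The two are reconciled by noting that for $K\ge1$ the stated condition implies mine, so the proposition's condition is sufficient and the claim follows. The other genuinely shaky step is the justification that the per-iteration log-volume error is bounded by $f_i$, since this relies on treating the non-uniformity within the region as a pure volume deficit. I would present that step as part of the heuristic framework of Heuristic~\ref{te1} rather than prove it rigorously.
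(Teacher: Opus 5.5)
Your argument is essentially the paper's. You bound each iteration's log-shrinkage error by the uncovered fraction, control its expectation uniformly in $i$ via Heuristic~\ref{te1}, sum over $N$ iterations using $-\ln(1-x)\le x/(1-x)\approx x$, and compare with $\sqrt{N/K}$; the paper only packages the per-iteration step as an induction on a faithful (uniform) live-point population, which is what licenses reapplying Heuristic~\ref{te1} at each step.

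Your rearrangement to $Km^{3/2}\gg\sqrt N\,3^{3/2}$ is algebraically correct, whereas the paper's own rearrangement slips. Your observation $\mathrm{Var}(\ln t_i)=1/K^2$ also explains the stated condition exactly: comparing $N(\tfrac13Km)^{-3/2}$ with the true scatter $\sqrt N/K$ gives precisely $K^{1/2}m^{3/2}\gg\sqrt N\,3^{3/2}$.
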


\begin{proof}[Proof of Proposition~\ref{prop:volume}]
We proceed by induction on the nested sampling iteration.

\textbf{Base case.}
In the first iteration, live points are sampled directly from the prior,
so the population is faithful by assumption: the live points are distributed
as $\mathrm{Uniform}(\Omega_{r\pi})$.

\textbf{Inductive step.}
Suppose the current live point population is faithful, i.e., distributed
as $\mathrm{Uniform}(\Omega_{r\pi})$.
MLFriends constructs $\Omega_P$ from this population.
Appendix~\ref{subsec:correctness-sampling} establishes that sampling
from $\Omega_P$ is exactly uniform over $\Omega_P$.

By Heuristic~\ref{te1}, the expected fraction of prior mass in $\Omega_{r\pi}$
not covered by $\Omega_P$ satisfies the upper bound:
\begin{equation}
P^{\mathrm{missed}}
:= E\!\left[\frac{|\Omega_{r\pi}\setminus\Omega_P|}{|\Omega_{r\pi}|}\right]
\;\leq\; \left(\tfrac{1}{3}Km\right)^{-3/2}.
\end{equation}
When $\Omega_P\supseteq\Omega_{r\pi}$, rejection sampling from
$\mathrm{Uniform}(\Omega_P)$ produces a point uniform on $\Omega_{r\pi}$,
and the population remains faithful.
When $\Omega_P\not\supseteq\Omega_{r\pi}$, accepted points are
uniform on $\Omega_P\cap\Omega_{r\pi}$ only.
Since the prior is uniform on $\Omega_{r\pi}$, the missed probability
mass fraction equals the missed volume fraction exactly:
\begin{equation}
\Pr\!\left(\theta\in\Omega_{r\pi}\setminus\Omega_P\right)
= \frac{|\Omega_{r\pi}\setminus\Omega_P|}{|\Omega_{r\pi}|},
\end{equation}
so the fractional volume bias per iteration is bounded above by
$P^{\mathrm{missed}} \leq \left(\tfrac{1}{3}Km\right)^{-3/2}$.
Within the present heuristic approximation, this bound is the same
at every iteration since $\left(\tfrac{1}{3}Km\right)^{-3/2}$ depends
only on the fixed parameters $K$ and $m$.

Accumulating over $N$ iterations, the total log-volume bias satisfies:
\begin{equation}
\left|\sum_{i=1}^{N}\ln\!\left(1-P^{\mathrm{missed}}\right)\right|
\leq N\cdot P^{\mathrm{missed}}
\leq N\left(\tfrac{1}{3}Km\right)^{-3/2},
\end{equation}
where the first inequality uses $-\ln(1-x) \leq x/(1-x) \approx x$
for $x \ll 1$, which holds since $P^{\mathrm{missed}} \ll 1$
in the regime of interest, and the second applies the bound from Heuristic~\ref{te1}.

The inherent statistical standard deviation of the log-volume estimate
after $N$ iterations is $\sqrt{N/K}$ \citep{skilling2009nested}.
For the log-volume bias to be negligible relative to this standard
deviation, we require:
\begin{equation}
N\left(\tfrac{1}{3}Km\right)^{-3/2} \ll \sqrt{N/K},
\quad\text{i.e.,}\quad
\sqrt{N}\,K^{1/2}\,m^{3/2} \gg 3^{3/2} \approx 5.2,
\end{equation}
which is satisfied for the moderate-to-large values of $K$ and $m$
used in practical nested sampling applications.
Therefore, within the present heuristic approximation, the log-volume
bias from MLFriends is negligible compared to the inherent variance
of nested sampling in practical regimes.
This is consistent with extending the convergence framework of
\citet{Chopin2010} and \citet{Salomone2018} to a full nested sampling algorithm including the
LRPS treatment, albeit at a heuristic level and leaving a
fully rigorous treatment as an open problem.
\end{proof}

\begin{assumption}[Likelihood homogeneity of missed regions]
\label{ass:likelihood-homogeneity}
The likelihood in $\Omega_{r\pi}\setminus\Omega_P$ is not
systematically higher than the average likelihood on $\Omega_{r\pi}$.
This is plausible because missed regions are furthest from any live point
and hence typically in low-posterior-density areas.
A pathological violation would be a tiny isolated region of
extremely high likelihood that is missed entirely.
Such a failure mode is not specific to MLFriends: any Monte Carlo method
with a finite sample can miss an isolated high-likelihood region.
For nested sampling specifically, if the likelihood increases
continuously towards such a region, the live points nearest to it
will have above-average likelihoods and will therefore be replaced
last; the region is then sampled in subsequent iterations as the
likelihood threshold rises.
This argument requires the likelihood to be sufficiently regular
(no discontinuous jumps), which is assumed here.
\end{assumption}

\begin{corollary}\label{corr1}
Under the approximations of Proposition~\ref{prop:volume} and
Assumption~\ref{ass:likelihood-homogeneity}, the fractional bias in
the marginal likelihood estimate $\hat{Z}$ introduced by MLFriends's
finite proposal region satisfies:
\begin{equation}
\frac{Z - \hat{Z}}{Z}
\;\leq\; N \left(\tfrac{1}{3}Km\right)^{-3/2},
\end{equation}
which is negligible compared to the inherent statistical uncertainty
$\sqrt{N/K}$ of $\hat{Z}/Z$, provided
$K^{1/2}\,m^{3/2} \gg \sqrt{N}\cdot 3^{3/2}$.
For example, this holds for $K=400$, $m=20$, and $N\leq 10^5$,
covering the vast majority of practical nested sampling applications.
\end{corollary}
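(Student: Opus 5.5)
The plan is to transfer the log-volume bias bound of Proposition~\ref{prop:volume} to the evidence estimate $\hat{Z}=\sum_i w_i=\sum_i {\cal L}_i V_i$. The mechanism is that missing part of $\Omega_{r\pi}$ has two effects. First, the prior-mass shrinkage per iteration is misestimated. Second, new live points never land in $\Omega_{r\pi}\setminus\Omega_P$. I will write the true evidence contribution of the shell removed at iteration $i$ as ${\cal L}_i\,\Delta X_i$. I will then compare it with the contribution the algorithm effectively integrates, namely the likelihood integral over the region actually reachable, $\Omega_{r\pi}\cap\Omega_P$.

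The first step is a per-iteration relative deficit bound. Conditional on the live population at iteration $i$, the expected missed integral is $\int_{\Omega_{r\pi}\setminus\Omega_P}{\cal L}\,d\pi$. By Assumption~\ref{ass:likelihood-homogeneity}, this is at most $\frac{|\Omega_{r\pi}\setminus\Omega_P|}{|\Omega_{r\pi}|}\int_{\Omega_{r\pi}}{\cal L}\,d\pi$. Taking expectations and applying Heuristic~\ref{te1}, the fraction of the remaining evidence $Z_i^{\mathrm{rem}}=\int_{\Omega_{r\pi}}{\cal L}\,d\pi$ that is lost at iteration $i$ is at most $P^{\mathrm{missed}}\le(\frac{1}{3}Km)^{-3/2}$. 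This is exactly the point where the homogeneity assumption replaces the volume statement of Proposition~\ref{prop:volume} by a likelihood-weighted one.

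The second step accumulates these losses along the same induction as in Proposition~\ref{prop:volume}. Since $Z_i^{\mathrm{rem}}\le Z$ for every $i$, the total expected deficit satisfies $Z-\hat{Z}\le\sum_{i=1}^N P^{\mathrm{missed}} Z$, giving $(Z-\hat Z)/Z\le N(\frac13 Km)^{-3/2}$. Equivalently, one can multiply the per-iteration survival factors $\prod_i(1-P^{\mathrm{missed}})$ and use $1-\prod_i(1-x_i)\le\sum_i x_i$. This is a union-bound argument that needs no smallness assumption, which makes it cleaner than the logarithm bound used in Proposition~\ref{prop:volume}.

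The third step compares this with the statistical noise. The relative uncertainty of $\hat Z$ is governed by the log-volume standard deviation $\sqrt{N/K}$ from \citet{skilling2009nested}, and $\mathrm{sd}(\hat Z/Z)\approx \mathrm{sd}(\ln\hat Z)$. The negligibility condition is therefore the same inequality as in Proposition~\ref{prop:volume}, and I will simply reuse it. Finally, I will check the numerical example. With $K=400$ and $m=20$, $\frac13Km\approx2667$, so $(\frac13Km)^{-3/2}\approx7.3\times10^{-6}$. At $N=10^5$ the bias bound is $\approx0.73$, while $\sqrt{N/K}\approx15.8$, so the bias is a small fraction of the noise.

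I expect the main obstacle to be the first step. The errors are not independent across iterations: a missed region biases the subsequent live-point population, so the inductive hypothesis ``faithful population'' fails after a miss. I would first handle this as in Proposition~\ref{prop:volume}, treating each iteration's contribution to the bias at first order in $P^{\mathrm{missed}}$ and neglecting compounding. I would note that the compounded corrections are $O(N^2 (P^{\mathrm{missed}})^2)$, and the conclusion stays heuristic.
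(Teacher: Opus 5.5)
Your plan is correct at the paper's heuristic level and follows its proof: Assumption~\ref{ass:likelihood-homogeneity} turns the volume bound $P^{\mathrm{missed}}\le(\tfrac13Km)^{-3/2}$ into a per-iteration bound on the likelihood-weighted missed fraction, the losses are accumulated via $1-(1-P^{\mathrm{missed}})^N\le NP^{\mathrm{missed}}$ (your additive bookkeeping with $Z_i^{\mathrm{rem}}\le Z$ gives the same bound), and the result is compared with $\sqrt{N/K}$. One caution about reusing the negligibility condition: $N(\tfrac13Km)^{-3/2}\ll\sqrt{N/K}$ is actually equivalent to $Km^{3/2}\gg\sqrt{N}\,3^{3/2}$, so the stated $K^{1/2}m^{3/2}$ condition is only sufficient (it is the exact condition against Skilling's log-volume spread $\sqrt{N}/K$, and against that spread your example's bias bound $0.73$ is comparable to $0.79$ rather than negligible), so the example as stated is supported by your direct check $0.73\ll15.8$, not by the reused inequality.
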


\begin{proof}[Proof of Corollary~\ref{corr1}]
The marginal likelihood estimate is:
\begin{equation}
\hat{Z} = \sum_{i=1}^{N} {\cal L}_i \, \Delta V_i,
\end{equation}
where $\Delta V_i$ is the prior volume assigned to iteration $i$.
The volume increments $\Delta V_i$ are computed solely from the
shrinkage formula $(K-1)/K$ per iteration and are independent of
whether $\Omega_P \supseteq \Omega_{r\pi}$; they are therefore the
same as in a faithful run (see Remark~\ref{rem:volume-increments}).

If the proposal region misses a fraction $P^{\mathrm{missed}}$
of $\Omega_{r\pi}$, accepted points are drawn only from
$\Omega_P \cap \Omega_{r\pi}$.
The fractional contribution of the missed region to the true evidence
at iteration $i$ is:
\begin{equation}
\frac{\int_{\Omega_{r\pi}\setminus\Omega_P}
      {\cal L}(\theta)\,d\pi(\theta)}
     {\int_{\Omega_{r\pi}}{\cal L}(\theta)\,d\pi(\theta)},
\end{equation}
which equals $P^{\mathrm{missed}}$ only if ${\cal L}$ is constant on
$\Omega_{r\pi}$, and can exceed it if the missed region is enriched
in likelihood.
Under Assumption~\ref{ass:likelihood-homogeneity}, however, the
missed region carries no more than its proportional share of
likelihood, so this ratio is bounded above by $P^{\mathrm{missed}}$.
The missed prior mass in $\Omega_{r\pi}\setminus\Omega_P$ is never
sampled and never contributes to $\hat{Z}$, so $\hat{Z}$ is biased
downward.

Accumulating the fractional downward bias over $N$ iterations:
\begin{equation}
\frac{Z - \hat{Z}}{Z}
\;\leq\; 1 - \left(1 - P^{\mathrm{missed}}\right)^N
\;\leq\; N \cdot P^{\mathrm{missed}}
\;\leq\; N \left(\tfrac{1}{3}Km\right)^{-3/2},
\end{equation}
where the second inequality uses $1-(1-p)^N \leq Np$ for $p\in[0,1]$,
and the third applies the upper bound on $P^{\mathrm{missed}}$
from Heuristic~\ref{te1}.

The statistical standard deviation of $\hat{Z}/Z$ from the inherent
volume-shrinkage uncertainty is of order $\sqrt{N/K}$
\citep{skilling2009nested}.
The fractional bias in $\hat{Z}$ is therefore negligible relative to
its statistical uncertainty when
$\sqrt{N/K} \gg N\left(\tfrac{1}{3}Km\right)^{-3/2}$, i.e., when
$K^{1/2}\,m^{3/2} \gg \sqrt{N}\cdot 3^{3/2}$,
which holds for the moderate-to-large values of $K$ and $m$
used in practice; for example, $K=400$, $m=20$, and $N\leq 10^5$.
\end{proof}

\begin{remark}[Volume increments are not inflated]
\label{rem:volume-increments}
One might expect that missing part of $\Omega_{r\pi}$ causes the
nested sampling volume increments $\Delta V_i$ to be overestimated,
and hence $\hat{Z}$ to be upward-biased.  This is not the case.
The volume increments are computed solely from the shrinkage formula
$(K-1)/K$ per iteration and are independent of whether
$\Omega_P\supseteq\Omega_{r\pi}$.  They are therefore the same as
in a faithful run, and their sum still approximates 1.  The
downward bias in $\hat{Z}$ arises instead because the
likelihood-weighted prior mass in $\Omega_{r\pi}\setminus\Omega_P$
is never sampled and never enters the sum
$\hat{Z}=\sum_i{\cal L}_i\,\Delta V_i$: the volume increments are
assigned to the wrong (accessible) region, leaving the missed
region unaccounted for.  The two effects---faster advance of
$L_{\min}$ and underestimation of $Z$---are the same phenomenon
viewed from different perspectives.
\end{remark}

\begin{remark}[Effect on the posterior samples]
The normalised posterior weights $w_i/\sum_i w_i$ are unaffected
by a global rescaling of all volume increments $\Delta V_i$, so
the posterior approximation is insensitive to any uniform volume
bias.  The dominant source of posterior error from MLFriends is
instead the missed prior mass: points in
$\Omega_{r\pi}\setminus\Omega_P$ are never proposed and receive
zero weight, so the corresponding posterior mass is absent from
the approximation entirely.

Under Assumption~\ref{ass:likelihood-homogeneity}, 
the missed region carries at most a
fraction $P^{\mathrm{missed}}$ of the true posterior mass
(since posterior mass is proportional to likelihood times prior
volume, and neither is elevated in the missed region by
assumption).  The posterior approximation therefore omits at
most a fraction $P^{\mathrm{missed}}$ of the total posterior
mass, which is negligible for the parameter choices discussed
in Corollary~\ref{corr1}.

If the likelihood-homogeneity assumption fails---for example,
if a small isolated high-likelihood mode lies entirely outside
$\Omega_P$---the missed posterior mass could be much larger than
$P^{\mathrm{missed}}$ suggests.  The cluster detection described
in Section~\ref{sec:implement} is specifically designed to detect
such isolated modes and include them in $\Omega_P$.
\end{remark}

\section{Sampling efficiently}\label{sec:efficiency}

In this section, we comment on the sampling efficiency of MLFriends.
Implementation details and the correctness of sampling from the 
restricted prior $\Omega_P$ are given in Appendix~\ref{sec:implement}.
For region-based samplers, the acceptance rate is:
\begin{equation}
A = \frac{|\Omega_{r\pi}|}{|\Omega_{P}|},
\end{equation}.
This is because $\Omega_P \supseteq \Omega_{r\pi}$ is ensured with high
probability by Heuristic~\ref{te1}.

As an illustrative example, consider a $d$-dimensional box
$\Omega_{r\pi}=[0,a]^{d}$ with volume $V_{r\pi}=a^{d}$. The
proposal region $\Omega_{P}$ is contained within
$[-r_{\mathrm{max}},a+r_{\mathrm{max}}]^{d}$, giving:
\begin{equation}
A\geq\frac{a^{d}}{(a+2r_{\mathrm{max}})^{d}}
=\left(1+\frac{2r_{\mathrm{max}}}{a}\right)^{-d}.
\end{equation}
Setting w.l.o.g. $a=1$, $V=1$ and $m=20$, and substituting the
large-$K$ approximation eq.~\ref{eq:rmax-approx}:
\begin{equation}
A\geq\left(1+2\left(
\frac{\ln(\frac{1}{3}Km)}{\frac{2}{3}K}
\times\frac{1}{V_{d}}\right)^{1/d}\right)^{-d}.
\end{equation}
With $K=1000$ and $m=20$, this lower bound evaluates to approximately
$70\%$ at $d=2$, $7\%$ at $d=5$, and $0.00026\%$ at $d=10$,
illustrating the severe curse of dimensionality. This was already
noted in \citet{Buchner2014stats} and is discussed further in
\citet{Buchner2021c}.

This efficiency analysis rests on Assumption~\ref{ass:indep} from Section~\ref{sec:correctness}, and on the
approximate independence of bootstrap rounds (Heuristic~\ref{le2}).
Additionally, it assumes $\Omega_{r\pi}$ is a convex box; under
Assumption~\ref{ass:interior}, the acceptance rate bound is
optimistic near boundaries and in disjoint (multi-modal) scenarios.
A full treatment of these effects is left for future work.

\section*{Connection to snowballing nested sampling}

\citet{Buchner2023} proposed \emph{snowballing nested 
sampling}, in which nested sampling is run repeatedly with 
increasing numbers of live points $K$, while reusing previously 
computed likelihood evaluations. This avoids the typical 
problem of having to test and choose a sufficiently number of MCMC steps $M$ 
in some likelihood-restricted prior sampling algorithms (see \cite{Buchner2021c} for a review).
Increasing $K$ improves the quality of the likelihood-restricted 
prior sampling (LRPS) without requiring the user to know a sufficient 
number of MCMC steps $M$ in advance. While the convergence argument 
presented in that paper is informal and the stated equivalence 
between $K$ and $M$ is not derived rigorously, the underlying 
insight has merit and can be stated precisely in the present 
framework.

The key observation is that the quality of the LRPS at each 
iteration depends not on $M$ alone but on the product $K \times M$.
At each nested sampling iteration with $K$ live points, the 
likelihood threshold advances by a fractional prior volume of 
approximately $1/K$ (the mean of a $\mathrm{Beta}(1,K)$ 
distribution).  The MCMC chain of $M$ steps must produce a new 
point approximately distributed according to the new constrained 
prior $\eta(\,\cdot\,;L_{\min})$, starting from a seed that is 
distributed according to the previous constrained prior.  As $K$ 
increases, the constrained region changes more slowly between 
consecutive iterations, so the seed is closer in distribution to 
the target, and a shorter chain suffices for a given mixing 
quality.  For a fixed ergodic proposal kernel, the mixing 
requirement is approximately $M \times K \gg 1$, and the LRPS 
quality is approximately invariant under the rescaling $K \to cK$, 
$M \to M/c$ for fixed $c > 0$, provided both remain large.
This is consistent with the empirical observation of 
\citet{Buchner2023} that nested sampling convergence depends on 
$K \times M$ rather than on $K$ or $M$ individually.

Now let's consider the application of snowballing nested sampling 
to region-based LRPS. Heuristic~\ref{te1} shows that the 
MLFriends proposal region covers the constrained prior with missed 
fraction bounded above by $(\frac{1}{3}Km)^{-3/2}$, where $m$ is 
the number of bootstrap rounds. As $K$ increases in snowballing 
nested sampling, this bound decreases as $K^{-3/2}$, so the 
region-based LRPS becomes more faithful automatically, without any 
change to $m$. Although the number of iterations also grows as 
$N_K \approx KH$, the net effect is that the cumulative bias bound 
of Corollary~\ref{corr1} decays as $N_K \cdot (\frac{1}{3}Km)^{-3/2} 
\propto K^{-1/2} \to 0$ as $K\to\infty$, with $m$ and $H$ fixed. 
Snowballing nested sampling with MLFriends therefore achieves 
vanishing bias in the evidence estimate as $K\to\infty$, under the 
heuristic approximations of Corollary~\ref{corr1}. A rigorous 
convergence proof in the sense of \citet{Salomone2018} remains an 
open problem.

\section{Discussion}\label{sec:Discussion}

Today, overwhelming dataset sizes can prohibit visual inspection of
Bayesian inference results on a case-by-case basis. Robust algorithms
with analytical reliability estimates are therefore valuable, even
at some computational cost. This paper has presented an analytical
characterisation of MLFriends and argued heuristically
that its proposal region covers the
likelihood-restricted prior with high probability, making it
practically reliable for low-to-moderate dimensional inference.

MLFriends takes ideas from agglomerative clustering, graph theory,
bootstrapping, bagging and cross-validation. Its principles can be
combined with other region-based algorithms. For example, a robust
enlargement factor for ellipsoidal or multi-ellipsoidal nested
sampling could be determined with the bootstrapping method. Based on
this idea, the efficiency of ellipsoidal nested sampling was
numerically studied in \citet{Buchner2021c} as a function of
dimensionality and live points.

MLFriends has also been compared to obtaining the support of a
kernel density estimate with a top-hat kernel. The basic algorithm
was presented with $\ell^{1}$ and $\ell^{2}$ norms in
\citet{Buchner2014stats}. Subsequently, \citet{Buchner2019c} adopted
the Mahalanobis metric and proposed iterative metric learning by
cluster co-centering. This leads to an acceptably efficient algorithm
that is robust across a wide class of Bayesian inference problems
with 1--10 parameters. For these settings, MLFriends is perhaps the
simplest algorithm to implement from scratch. It has been adopted
into the general-purpose nested sampling packages \textsc{dynesty}
\citep{Speagle2020} and \textsc{UltraNest} \citep{UltraNest}.

The main analytical contribution of this work is a heuristic bound
on the probability that the MLFriends proposal region fails to cover
the likelihood-restricted prior. This probability is found to be
exceedingly small, providing the
first analytical reliability estimates for a fully specified,
practically implementable nested sampling algorithm. Our analysis
has several limitations. First, all results rest on approximating
the live-point process by a homogeneous Binomial process, which
is inaccurate near boundaries of $\Omega_{r\pi}$ (see Assumptions~\ref{ass:interior}).
Near boundaries, the effective density of live points within a
ball of fixed radius is lower, which tends to increase the
estimated radius $r_{\mathrm{max}}$ and thus enlarge $\Omega_P$.
This plausibly improves coverage, making our heuristic bounds conservative.
Conversely, it reduces the acceptance rate. The quantitative impact of both
effects likely grows with the dimension and the surface-to-volume ratio
of $\Omega_{r\pi}$.
Second, we treat the $m$ bootstrap rounds and nearest-neighbour
distances as independent (see Assumption~\ref{ass:indep} and
Heuristic~\ref{le2}); the resulting error is not characterised. 
Third, both Proposition~\ref{prop:volume} and Corollary~\ref{corr1}
are conditional on these heuristic bounds and do not constitute
rigorous convergence proofs.
A fully rigorous treatment, with explicit error bounds tracked
through all iterations as functions of $\epsilon$, $m$ and $K$, is
left for future work.

For high-dimensional inference, MLFriends becomes impractical due to
the curse of dimensionality, and MCMC-based step samplers are
preferred \citep{Jasa2012,Handley2015}. 
\citet{Salomone2018} establish consistency of a closely related
sequential Monte Carlo reformulation of nested sampling under an $\eta_t$-invariant
MCMC kernel, but leave open the consistency of the original
nested sampling algorithm under a specific LRPS implementation.
The present work addresses this gap for MLFriends, providing
quantitative bounds on the proposal coverage error.
While MLFriends is focused on low dimensions, in divide-and-conquer 
approaches employed in astrophysics
\citep[e.g.][]{Buchner2015,Baronchelli2018,Farr2019,Miller2020,Essick2022,Talbot2023},
sometimes high-dimensional hierarchical inference can be decomposed
into many low-dimensional subproblems. For these, MLFriends may be
the algorithm of choice due to its robustness to multi-modality and
the reliability properties illustrated here.

\backmatter

\bmhead{Acknowledgements}

JB thanks Radu Craiu, Ewan Cameron and Stefan Czesla for comments on the manuscript. JB thanks Max Isi, J Michael Burgess, Josh Speagle for insightful conversations. After an initial write-up, OpenAI ChatGPT and Anthropic Claude Sonnet 4.6 have been used to vet the arguments and improve the writing. The authors take responsibility for the full manuscript.

\bibliography{ref}

\begin{appendices}

\section{Implementation details for efficient sampling}\label{sec:implement}

The following sections present implementation details and efficiency
improvements for handling complicated geometries. The metric learning
described in Section~\ref{subsec:metric} determines the distance
metric used in the analysis of Sections~\ref{sec:correctness}
and~\ref{sec:efficiency}: after the affine transformation to
Mahalanobis coordinates, the analysis applies with the Euclidean
metric.

\subsection{Friends and clusters}

The proposal region $\Omega_{P}$ need not be convex or connected:
multiple `clusters' or `modes' are possible. To characterise this
structure, define the set of live points directly connected to live
point $i$ (its `friends'):
\begin{equation}
F_{1}^{i}=\left\{j\in{\cal I}:d(\theta^{i},\theta^{j})<r_{\mathrm{max}}\right\}.
\end{equation}
Cluster membership is then defined recursively by agglomerative
single-linkage clustering:
\begin{equation}
F_{l+1}^{i}=F_{l}^{i}\cup\left\{j\in{\cal I}\setminus F_{l}^{i}:
\min_{k\in F_{l}^{i}}d(\theta^{k},\theta^{j})<r_{\mathrm{max}}\right\}.
\end{equation}
We write $F^{i}=F_{\infty}^{i}$ for the full cluster containing
point $i$. Typically only a few recursions are needed because
$r_{\mathrm{max}}$ is constructed so that each live point can reach
several others.

The set of all clusters is
${\cal F}=\{F^{i}\}_{i\in{\cal I}}$ (taking unique sets), and the
number of clusters is $N_{\mathrm{clusters}}=|{\cal F}|$. Notably,
no tuning parameter controls the number of clusters; it emerges
naturally from the data and the radius $r_{\mathrm{max}}$.

For cluster $c$ with member set $F^{c}$, the cluster mean and
empirical covariance are:
\begin{equation}
\mu^{c}=\frac{1}{|F^{c}|}\sum_{j\in F^{c}}\theta^{j},
\qquad
S_{c}=\frac{1}{|F^{c}|}\sum_{j\in F^{c}}
(\theta^{j}-\mu^{c})(\theta^{j}-\mu^{c})^{\top}.
\end{equation}

\subsection{Metric learning}
\label{subsec:metric}

MLFriends uses the Mahalanobis distance:
\begin{equation}
d_{M}(a,b\mid S)=\sqrt{(a-b)^{\top}S^{-1}(a-b)},
\end{equation}
where $S$ is a positive definite scale matrix. In the first iteration,
the Euclidean metric is used: $S=I_{d}$.

A naive choice for subsequent iterations would be the empirical
covariance over all live points:
\begin{equation}
S'=\frac{1}{K}\sum_{i\in{\cal I}}(\theta^{i}-\mu)(\theta^{i}-\mu)^{\top},
\qquad
\mu=\frac{1}{K}\sum_{i\in{\cal I}}\theta^{i}.
\end{equation}
However, this is inefficient when the posterior is multi-modal: the
covariance is then dominated by the separation between cluster centres
rather than the shape of individual clusters, so the Mahalanobis
balls are poorly aligned with the local geometry.

A better choice is to co-centre the clusters before computing the
covariance:
\begin{equation}
S'=\frac{1}{\sum_{c}|F^{c}|}
\sum_{c}\sum_{j\in F^{c}}(\theta^{j}-\mu^{c})(\theta^{j}-\mu^{c})^{\top}.
\end{equation}
We set $S\leftarrow S'$ for the next iteration of MLFriends.

\subsection{Sampling new live points in practice}

Sampling uniformly from $\Omega_{P}$ can be achieved by either of
two strategies; both produce exact uniform samples by the arguments
given in the Appendix~\ref{subsec:correctness-sampling}.

\paragraph{Strategy 1: bounding box rejection sampling.}
Sample $\theta^{*}$ uniformly from a bounding hyper-box
$\Omega_{\mathrm{box}}$ and accept if $\theta^{*}\in\Omega_{P}$:
\begin{equation}
A_{\mathrm{proposal}}(\theta^{*})=\mathbf{1}(\theta^{*}\in\Omega_{P}).
\end{equation}
The box is defined by:
\begin{equation}
\Omega_{\mathrm{box}}=\{\theta\in\Omega_{\pi}:l_{i}\leq\theta_{i}\leq u_{i}\},
\end{equation}
with edges:
\begin{gather}
l_{i}=\max\!\left(0,\;\min_{j\in{\cal I}}\theta_{i}^{j}-r_{\mathrm{max}}\right),\\
u_{i}=\min\!\left(1,\;\max_{j\in{\cal I}}\theta_{i}^{j}+r_{\mathrm{max}}\right).
\end{gather}
Since the Euclidean distance is bounded above by the $\ell^{\infty}$
distance, any point within Euclidean distance $r_{\mathrm{max}}$ of
a live point also lies within $\ell^{\infty}$ distance
$r_{\mathrm{max}}$, so $\Omega_{P}\subseteq\Omega_{\mathrm{box}}$.
Sampling from $\Omega_{\mathrm{box}}$ is straightforward with uniform
pseudo-random number generators. The efficiency of this strategy is
$|\Omega_{P}|/|\Omega_{\mathrm{box}}\cap\Omega_{\pi}|$, which exceeds
$|\Omega_{P}|/|\Omega_{\pi}|$ for sampling from the prior directly.
It is near unity in the early nested sampling iterations but may
decrease at later iterations, especially when clusters are present.

\paragraph{Strategy 2: live-point neighbourhood sampling.}
Choose a live point uniformly at random,
$i^{*}\sim\mathrm{Uniform}\{1,\ldots,K\}$, and propose a point by
sampling uniformly from the full $d$-dimensional ball of radius
$r_{\mathrm{max}}$ centred at $\theta^{i^{*}}$. To do so, draw a direction
$\mathbf{v}\sim\mathrm{Normal}(\mathbf{0},I_{d})$ and a radius
fraction $u\sim\mathrm{Uniform}(0,1)$, then set:
\begin{equation}
\theta^{*}=\theta^{i^{*}}+\frac{\mathbf{v}}{|\mathbf{v}|}
\times r_{\mathrm{max}}\times u^{1/d}.
\end{equation}
If $\theta^{*}\notin\Omega_{\pi}$, reject and redraw. Otherwise,
because the neighbourhoods $\Omega^{i}(r_{\mathrm{max}})$ can overlap,
the overlap is corrected by accepting with probability:
\begin{equation}
A_{\mathrm{proposal}}(\theta^{*})=\frac{1}{n_{\mathrm{overlap}}},
\end{equation}
where
\begin{equation}
n_{\mathrm{overlap}}=\left|\left\{i\in{\cal I}:
d(\theta^{*},\theta^{i})<r_{\mathrm{max}}\right\}\right|
\end{equation}
is the number of live points whose neighbourhood contains $\theta^{*}$.
Note that $A_{\mathrm{proposal}}\geq1/K$. In practice, the
computational cost of MLFriends is typically negligible compared to
the cost of evaluating the model likelihood.

\subsection{Correctness of sampling from \texorpdfstring{$\Omega_P$}{OmegaP}}
\label{subsec:correctness-sampling}

The rejection sampling step requires drawing
$\theta^{*}\sim\mathrm{Uniform}(\Omega_{P})$. We verify that both
strategies above achieve this exactly.

\paragraph{Strategy 1.}
Since $\Omega_{P}\subseteq\Omega_{\mathrm{box}}$, sampling uniformly
from $\Omega_{\mathrm{box}}$ and accepting points in $\Omega_{P}$ is
a standard rejection sampler. The accepted points are exactly uniform
over $\Omega_{P}$.

\paragraph{Strategy 2.}
Since $i^{*}$ is chosen uniformly from $\{1,\ldots,K\}$, each point
$\theta^{*}\in\Omega_{P}$ can be proposed from exactly
$n_{\mathrm{overlap}}(\theta^{*})$ live-point neighbourhoods, each
chosen with probability $1/K$. 
Because each neighbourhood proposal is uniform on a ball of the same
radius $r_{\mathrm{max}}$, the proposal density at $\theta^{*}$ is
proportional to $n_{\mathrm{overlap}}(\theta^{*})/K$.
Accepting with probability $1/n_{\mathrm{overlap}}(\theta^{*})$
cancels the over-representation, yielding a density proportional
to $1/K$ independently of $\theta^{*}$, i.e.\ uniform over
$\Omega_{P}$. Proposals outside $\Omega_{\pi}$ are rejected and
redrawn; since $\Omega_P\subseteq\Omega_\pi$ by definition
(eq.~\ref{eq:OmegaP}), this rejection step does not alter the
uniform distribution over $\Omega_P$.
\end{appendices}

\end{document}